\theoremstyle{definition}
\newtheorem{theorem}{Theorem}
\newtheorem{dfn}{Definition}
\renewcommand{\P}{\mathbb{P}}
\newcommand{\E}{\mathbb{E}}
\def\mcD{{\mathcal D}}
\def\mcO{{\mathcal O}}
\newcommand{\var}{\hbox{VAR}}
\newcommand{\arr}[1]{\textcolor{blue}{#1}}
\begin{document}

\title{Fundamental Scaling Laws of Covert Communication in the Presence of Block Fading}

\author{Amir Reza Ramtin,~\IEEEmembership{Member,~IEEE,}\thanks{This work was supported in part by the National Science Foundation under Grant ECCS-2148159.} Dennis Goeckel,~\IEEEmembership{Fellow,~IEEE,} and Don Towsley,~\IEEEmembership{Fellow,~IEEE} 
}

\maketitle

\begin{abstract}
	
	Covert communication
 is the undetected transmission of sensitive information over a communication channel. 
 In wireless communication systems, channel impairments such as signal fading 
 present challenges in the effective implementation and analysis of covert communication systems. This paper generalizes early work in the covert communication field by considering asymptotic results for the number of bits that can be covertly transmitted in $n$ channel uses on a block fading channel.  Critical to the investigation is characterizing the performance of optimal detectors at the adversary.  Matching achievable and converse results are presented.
	
\end{abstract}

\begin{IEEEkeywords}
Low probability of detection communication, covert communication, physical layer security, fading channels.
\end{IEEEkeywords}

\section{Introduction}

Covert communication is the transmission of information between two parties, traditionally known as Alice and Bob, while ensuring that the presence of the transmission is undetected by a third-party observer (adversary) named Willie. It is particularly applicable in military and security domains, where securely transferring sensitive information without detection is of utmost importance \cite{bloch2008wireless, liang2017information}.  A crucial metric in the domain of covert communication is
the number of bits that can be reliably and covertly transmitted in $n$ uses of the channel \cite{bash2013limits,sobers2017covert}.  Evaluating the performance of covert communication systems traditionally involves assessing decoding error probabilities at the legitimate receiver Bob and detection error probabilities at the adversary Willie. These probabilities depend on the specific communication scenario, such as the presence of noise, fast fading, and the detection strategy Willie employs.  It can be particularly difficult to characterize the optimal strategy at the adversary Willie so as to characterize achievable results for covert communication between Alice and Bob \cite{sobers2017covert}.

In \cite{tahmasbi2019covert}, the authors characterized the maximum number of bits that can be covertly transferred in a non-coherent fast Rayleigh-fading wireless channel, showing that this maximum is achieved with an amplitude-constrained input distribution having a finite number of mass points, including zero. It provides numerically tractable bounds and conjectures that two-point distributions may be optimal. However, the maximum number of bits that can be covertly transferred in this channel still adheres to the square root law (SRL), as in \cite{bash2013limits}.  When a friendly jammer is present, the number of bits that can be transmitted covertly increases, and \cite{sobers2017covert} characterized scaling laws for achievable results for both additive white Gaussian noise (AWGN) and block fading channels with a constant number of fading blocks (i.e., a number of fading blocks not scaling with $n$) in the presence of such a jammer.  \cite{goeckel2018covert} generalizes  \cite{sobers2017covert} to the case of covert communication in the presence of a jammer with an arbitrary number of fading blocks per codeword, 
and provides upper and lower bounds to the covert throughput.

Here we consider scaling results for covert system performance 
under a block fading model with an arbitrary block size per codeword in the absence of jamming.  Although the model differs from that considered in \cite{sobers2017covert}, a similar challenge arises:  it is not easy to characterize the performance of the optimal detector at Willie.  When faced with such a challenge, authors often assume the adversary employs a reasonable but potentially sub-optimal detector such as a power detector, and then characterize covert system performance \cite{chen2023covert}. 
However, as illustrated in Section~\ref{sec:eval}, the power detector differs from the optimal receiver in the presence of block fading, which underscores the need to specifically consider the optimal detector to investigate scaling laws of covert communication.

The work of \cite{sobers2017covert} indicates that random jamming, the impairment considered there, has an asymmetric effect: it inhibits Willie's ability to detect the signal more than Bob's ability to decode it, hence improving covert throughput.  Hence, in the case considered here,
we might expect a similar effect:  that block fading inhibits Willie's ability to detect a transmission from Alice, allowing Alice to exceed the AWGN case of \cite{bash2013limits} by using power $\mcO(n^{-\alpha})$ and achieving throughput $\mcO(n^{1-\alpha})$ for some $\alpha<0.5$. 
This possibility is rigorously investigated through the converse analysis presented in Theorem~\ref{th:converse}. Investigating achievability, as we do in Theorem~\ref{th:ach}, is also necessary and non-trivial.
Perhaps surprisingly, we show that $\mcO(\sqrt{n})$ remains the fundamental limit.

 While various works in covert communication have explored block fading environments, 
typically 
	focusing on achievable covert rates and detector performance under specific fading scenarios rather than scaling laws \cite{shahzad_covert_2017, hu2018covert, ta_covert_2019},
no existing work has rigorously investigated the scaling laws for covert communication in a block fading environment using an optimal detector. As we discussed in the previous paragraph, it remains unclear whether the square root law observed in AWGN channels applies to this setting, given the distinct challenges in detection accuracy under block fading. Our work addresses this open question, providing a foundation for understanding covertness in more realistic fading environments.

The contributions of this work are:
\begin{itemize}
	\item We extend the analysis of \cite{bash2013limits} to environments with block fading with arbitrary block size. Our findings establish asymptotic bounds for covert communication taking into account these variable environmental factors, and throughout we consider the use of an optimal detector by the adversary.
	\item We consider the scenario where the adversary, Willie, does not have prior knowledge of 
 the distribution of the signal when Alice is transmitting. We show that our asymptotic results and the effectiveness of the optimal detector remain valid even under this condition, thereby broadening the applicability of our results.
\end{itemize}

The paper is structured as follows.  In Section 2, we define the problem and provide an overview of the covert communication system being studied. Section 3 presents the primary theoretical findings, encompassing the asymptotic bounds for the covertness. To validate the theoretical analysis, Section 4 offers an evaluation of system performance through numerical simulations. Finally, in Section 5, we conclude the paper by summarizing the key insights obtained.

\section{Problem Definition and Framework}

\subsection{System Overview}

Figure~\ref{fig:abw} provides a schematic of the covert communication system. It includes Alice (a), who may attempt to send a message to her intended recipient, Bob (b), in such a way that it remains undetected by the observant and capable adversary, Willie (w), as defined below.
\begin{figure}[h]
	\centering
	\includegraphics[width=0.10\textwidth]{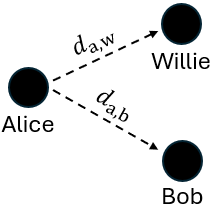}
	\caption{Alice (a) is trying to send messages to Bob (b) without detection by a capable and observant Willie (w). Here, $d_{x, y}$ stands for the distance from one transmitter $x$ to a receiver $y$.}
	\label{fig:abw}
\end{figure}

The framework for our study is grounded on the block fading model \cite{tse2005fundamentals}. In this model, the length of a codeword, denoted as $ n $, is divided into $ M(n) $ separate fading blocks. This division is illustrated in Figure~\ref{fig:cwbm}. Within each block, fading remains constant over its length $ B(n) = n / M(n) $; however, fading of different blocks of symbols is independent.
\begin{figure}[h]
	\centering
	\includegraphics[width=0.25\textwidth]{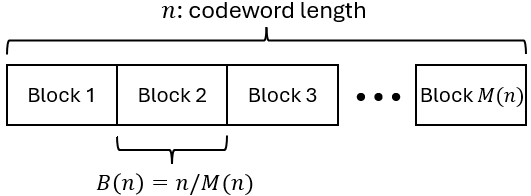}
	\caption{Over the length of a codeword, denoted by $n$, signal fading occurs independently $M(n)$ times, resulting in a block of length $B(n)=n / M(n)$, within which the fading remains constant.}
	\label{fig:cwbm}
\end{figure}
Note that, as $n \to \infty$, we allow $M(n) \to \infty$ as well.

In each fading block $ k $, the complex fading between a transmitter $ x $ and a receiver $ y $ is represented as $ h_k^{(x, y)} $. In our context, $ x $ is Alice ("$a$"), while $ y $ is either Bob ("$b$") or Willie ("$w$"). We assume Rayleigh fading, which implies that the fading coefficient $ h_k^{(x, y)} $ is a zero-mean complex Gaussian random variable with $ \E[|h_{x, y}|^2] = 1 $.

Alice decides whether to transmit ($\mathrm{H_1} $) or refrain from transmitting ($ \mathrm{H_0} $), with an equal chance of each of the two options. 
 When Alice decides to transmit, her transmitted signal is given by 
 symbols $ f_1, f_2, ..., f_n $.

For Willie, the received signal is then given as ($i=1,2,\ldots ,n$):
\[ 
Z_i = \begin{cases} 
\frac{h_{\lceil i / B(n)\rceil}^{(a, w)} f_i}{d_{a, w}^{\alpha/2}} + N_i^{(w)}, & \text{Alice transmits,} \\ 
N_i^{(w)}, & \text{otherwise,}
\end{cases} 
\]
where $ \alpha $ represents the pathloss exponent, and  $ \{N_i^{(w)}, i=1,2, ..., n\} $ is a sequence of i.i.d. complex Gaussian random variables, each with zero mean and variance $ \sigma_0^2 $.  

Similarly, for Bob, the received signal is modeled as:
\[ 
Y_i = \begin{cases} 
\frac{h_{\lceil i / B(n)\rceil}^{(a, b)} f_i}{d_{a, b}^{\alpha/2}} + N_i^{(b)}, & \text{Alice transmits,} \\ 
N_i^{(b)}, & \text{otherwise,}
\end{cases} 
\] where $ \{N_i^{(b)}, i=1,2, ..., n\} $ is an i.i.d. sequence of complex Gaussian random variables, each with mean zero and variance $ \sigma_b^2 $. 


\subsection{Covert Rate Asymptotics and Objectives}

The goal of this study is to derive asymptotic bounds on the number of bits that can be covertly transmitted in a given number of channel uses, in the presence of block fading between Alice and Willie. In the covert communication system, Willie, the adversary, performs a hypothesis test to determine if Alice transmitted in the slot of length $n$. Based on this, under hypothesis $\mathrm{H_1}$, Alice has transmitted, while under hypothesis $\mathrm{H_0}$, she has not.

Willie employs statistical hypothesis testing to distinguish between the null hypothesis $\mathrm{H_0}$ and the alternative hypothesis $\mathrm{H_1}$.
Willie's goal is to minimize the probability of missed detection, $\P_{MD}$, while ensuring a minimum false alarm rate, $\P_{FA}$. 
The system is covert \cite{bash2013limits} if $\P_{FA} + \P_{MD} \ge 1-\epsilon$ for any $\epsilon > 0$ when $n$ is sufficiently large.

\section{Main Results}
\label{sec:model}

We now present our results for the communication system under block fading as described in the previous section. First, we present some foundational concepts related to covertness and hypothesis testing. Subsequently, we present the achievability and converse proofs. 

\subsection{Preliminaries}
Consider an i.i.d. sequence of real variables $\{Z_i\}_{i=1}^n$ representing the signals received by Willie. The distribution of the observations under each hypothesis, $\mathrm{H_0}$ and $\mathrm{H_1}$, is given by $\P_0$ and $\P_1$, respectively.

Given that distribution $\P_0$ is known, Willie can formulate an optimal statistical hypothesis test (e.g., the Neyman-Pearson test given $\P_1$ is also known). This test minimizes the sum of error probabilities
\begin{equation} \P_E = \P_{FA} + \P_{MD},\end{equation}\hspace*{-3px}\cite[Ch. 13]{LR05}. In our analysis, we consider the cases that $\P_1$ is known or $\P_1$ is unknown.  

For such a test scenario, the following theorem holds:
\begin{theorem}\cite{bash2013limits}
	\label{thm:optimal-test}
	{\small\begin{equation}
	 \P_E \ge 1-\sqrt{\frac{1}{2}\mcD (\P_0||\P_1)},
	\end{equation}}
\end{theorem}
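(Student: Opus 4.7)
The plan is to reduce the bound on $\P_E$ to the classical chain
$$\P_E = 1 - V(\P_0,\P_1) \ge 1 - \sqrt{\tfrac{1}{2}\mathcal{D}(\P_0\|\P_1)},$$
where $V(\P_0,\P_1)$ denotes the total variation distance. The first equality is the standard consequence of the Neyman--Pearson lemma (or, equivalently, the Bayes risk characterization of the optimal test), and the second is Pinsker's inequality. So I would carry out the proof in three short steps.

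First, I would set up the hypothesis testing framework: for any test, writing $A$ for the acceptance region of $\mathrm{H}_1$, we have $\P_{FA} = \P_0(A)$ and $\P_{MD} = \P_1(A^c)$, so
$$\P_E = \P_0(A) + 1 - \P_1(A) = 1 - (\P_1(A) - \P_0(A)).$$
Taking the supremum over measurable $A$ gives $\sup_A(\P_1(A)-\P_0(A)) = V(\P_0,\P_1)$, and this supremum is attained by the likelihood ratio test, which is what Willie (optimally) employs. Hence $\P_E \ge 1 - V(\P_0,\P_1)$ for any test, with equality for the optimal test.

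Second, I would invoke Pinsker's inequality,
$$V(\P_0,\P_1) \le \sqrt{\tfrac{1}{2}\mathcal{D}(\P_0\|\P_1)},$$
which is a standard information-theoretic inequality. Substituting this into the previous bound yields the desired inequality.

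The only mildly delicate point, and the one I would treat as the main obstacle if I had to be fully self-contained, is justifying Pinsker's inequality; however, since the statement is attributed to \cite{bash2013limits} and Pinsker's inequality is a well-known textbook result, I would simply cite it rather than re-derive it. Everything else is bookkeeping around the definitions of $\P_{FA}$ and $\P_{MD}$ and the variational formula for total variation distance.
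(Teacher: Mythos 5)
Your proposal is correct and follows essentially the same route as the source the paper cites for this theorem (\cite{bash2013limits}): the optimal test's total error equals $1 - V(\P_0,\P_1)$ by the Neyman--Pearson/variational characterization, and Pinsker's inequality then bounds $V(\P_0,\P_1)$ by $\sqrt{\tfrac{1}{2}\mcD(\P_0\|\P_1)}$. The paper itself offers no independent proof, so there is nothing further to reconcile.
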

\noindent where $\mcD (\P_0||\P_1)$ is the relative entropy, defined as follows.
\begin{dfn} 
	The relative entropy (also known as Kullback-Leibler divergence) between two probability measures $\P_0$ and $\P_1$ with corresponding nonnegative densities $f_0$ and $f_1$ is
	{\small\begin{equation}
	\label{eq:KL}
	\mcD (\P_0||\P_1) = \int_0^\infty f_0(x)\ln\frac{f_0(x)}{f_1(x)}dx.
	\end{equation} }
\end{dfn}

\subsection{Lower Bound}

\begin{theorem}[Lower bound]
	\label{th:ach}
	Under the fading model introduced previously, given a total of $n$ channel uses, Alice can transmit $\Omega(\sqrt{n})$\footnote{In covert communications, the lower bound is often expressed as an achievability result using $\mcO(\cdot)$. However, we find that $\Omega(\cdot)$ is more appropriate, as it explicitly provides a lower bound on achievable performance.} bits
    while remaining covert. This remains valid regardless of whether or not Willie knows $\P_1$.
\end{theorem}
\begin{proof} 
	
Let $\mathrm {Re}(z)$ and $\mathrm{Im}(z)$ denote the real and imaginary parts of $z \in \mathbb{C}$, respectively. Define 
\begin{equation}
\label{def:fztheta}
f(z,\theta) = f_g(\mathrm {Re}(z),\theta/2)f_g(\mathrm{Im}(z),\theta/2),
\end{equation}
where $f_g(x,\theta)$ is a Gaussian density function with mean 0 and variance \( \theta \), i.e.,
\begin{equation}
f_g(x,\theta)=\frac{1}{\sqrt{2\pi \theta}}e^{-\frac{x^2}{2\theta}}, \quad x\in \mathbb{R}.
\end{equation} 

Define the realization of the random variables $\{Z_{(i-1)B(n)+j}\}_{j=1}^{B(n)}$ with $i=1,2,\ldots,M(n)$ as vector $\mathbf{z}_i = (z_{(i-1)B(n)+1}, z_{(i-1)B(n)+2}, \ldots, z_{iB(n)})$, where $B(n)$ represents the block length within each fading block and $M(n)$ denotes the number of blocks.
As specified in the main channel throughput argument later, Alice draws her message from an i.i.d. Gaussian codebook; hence, $f_1,
f_2,...,f_n$ are i.i.d. random variables, where $f_i\sim\mathcal{CN}(0,\sigma_a^2(n))$ for $i=1,2,\dots,n$. Let $\lambda = d_{aw}^\alpha$.
Then, the probability density function of the $i$th block under $\mathrm{H_0}$ is given by
{\small\begin{equation}
f_0(\mathbf{z}_i)  = \prod_{j=1}^{B(n)} f(\mathbf{z}_{i,j},\sigma_0^2),
\end{equation}}
and under $\mathrm{H_1}$ is given by
{\small\begin{equation}
f_1 (\mathbf{z}_i)  = \int_0^\infty \prod_{j=1}^{B(n)} f(\mathbf{z}_{i,j},\sigma_0^2+x\sigma_a^2(n)) \lambda e^{-\lambda x}dx.
\end{equation}}

The KL divergence is given by (\ref{eq:KL}) which 
is not easily computed. However, invoking the convexity property
of relative entropy \cite{CT02} yields
{\small\begin{equation}
\begin{aligned}
&\mcD\big(f_1(\mathbf{z}_i)||f_0(\mathbf{z}_i)\big)\\ &  \quad\le \int_{-\infty}^{\infty} \mcD(f(\mathbf{z}_{i,j},\sigma_0^2+\sigma_a^2(n)x)^{B(n)}\|f(\mathbf{z}_{i,j},\sigma_0^2)^{B(n)}) \lambda e^{-\lambda x} d x\\
& \quad= B(n)\int_{-\infty}^{\infty}\mcD(f(\mathbf{z}_{i,j},\sigma_0^2+\sigma_a^2(n)x)\|f(\mathbf{z}_{i,j},\sigma_0^2)) \lambda e^{-\lambda x} d x,
\end{aligned}
\end{equation}}
where the last equality follows from the additive property of relative entropy for independent distributions.

Following \eqref{def:fztheta} and employing the same additive property,
\begin{equation}\begin{aligned}
&\mcD\big(f(\mathbf{z}_{i,j},\sigma_0^2+\sigma_a^2(n)x)\|f(\mathbf{z}_{i,j},\sigma_0^2)\big)\\
& \quad=\mcD\big(f_g(\mathrm {Re}(\mathbf{z}_{i,j}),\sigma_0^2/2+\sigma_a^2(n)x/2)\|f_g(\mathrm {Re}(\mathbf{z}_{i,j}),\sigma_0^2/2)\big)\\
&\quad+ \mcD\big(f_g(\mathrm{Im}(\mathbf{z}_{i,j}),\sigma_0^2/2+\sigma_a^2(n)x/2)\|f_g(\mathrm{Im}(\mathbf{z}_{i,j}),\sigma_0^2/2)\big) \\
&\quad=-\log\! \big(1+ x{\sigma_a^2(n)}/{\sigma_0^2}\big) +x{\sigma_a^2(n)}/{\sigma_0^2}.
\end{aligned}\end{equation}
This implies that
{\small\begin{equation}\begin{aligned}
&\mcD\big(f_1(\mathbf{z}_i)||f_0(\mathbf{z}_i)\big)\\
& \quad\le 
{B(n)}\int_0^\infty \Big(-\log\!\Big(1+ \tfrac{\sigma_a^2(n)}{\sigma_0^2}x\Big) +\tfrac{\sigma_a^2(n)}{\sigma_0^2}x\Big)\lambda e^{-\lambda x}\,dx\\
& \quad= -{B(n)}\int_0^\infty \log\! \Big(1+ \tfrac{\sigma_a^2(n)}{\sigma_0^2}x\Big)\lambda e^{-\lambda x}\,dx +\tfrac{B(n)\sigma_a^2(n)}{\lambda\sigma_0^2}.
\end{aligned}\end{equation}}

We know that 
\begin{equation}\int_0^{\infty}\log(1+cx)\lambda e^{-\lambda x}\,dx=-e^{\lambda/c}\operatorname{Ei}(-\lambda/c),\end{equation}
where 
$\operatorname{Ei}(x)= -\int_{-x}^\infty \frac{e^{-t}}t\,dt$. Therefore, 
{\small \begin{equation}\begin{aligned}  
\mcD\big(f_1(\mathbf{z}_i)||f_0(\mathbf{z}_i)\big) & \le {B(n)}e^{\frac{\lambda\sigma_0^2}{\sigma_a^2(n)}}\operatorname{Ei}\!\Big(-\frac{\lambda\sigma_0^2}{\sigma_a^2(n)}\Big) + \frac{B(n)\sigma_a^2(n)}{\lambda\sigma_0^2}.
\end{aligned}\end{equation}}

Note that $\operatorname{Ei}(-x)< -\frac12e^{-x}\log\!\big(1+\frac2x\big)$ and $\log(1+x)\ge x-\frac{x^2}2$, which together imply $\operatorname{Ei}(-x)< -\frac12e^{-x}\big(\frac2x-\frac2{x^2}\big)$. Using this bound and simplifying yields 
{\small\begin{equation}\begin{aligned}
\mcD\big(f_1(\mathbf{z}_i)||f_0(\mathbf{z}_i)\big)
& < 
\frac{B(n)\sigma_a^4(n)}{\lambda^2\sigma_0^4}.
\end{aligned}
\label{eq:DKL}
\end{equation}}

From Theorem~\ref{thm:optimal-test}, it suffices to find the fastest growing $\sigma_a(n)$ such that
$\lim_{n\rightarrow \infty}M(n)\mcD\big(f_1({\mathbf{z}_i})||f_0({\mathbf{z}_i})\big) <\infty,$
which establishes 
$M(n)\frac{B(n)\sigma_a^4(n)}{\lambda^2\sigma_0^4} = \mcO(1).$
This implies
\begin{equation}
\sigma_a^2(n)=\mcO\big(1/\sqrt{M(n)B(n)}\big)=\mcO(1/\sqrt{n}).
\end{equation}
Thus Alice can transmit with power $\mcO\big(1/\sqrt{n)}$ 
with this scheme. As there may be better performing schemes, this establishes that Alice can transmit with power $\Omega(1/\sqrt{n})$.

This result is built on 
an optimal detector 
with knowledge of both $\P_0$ and $\P_1$. It remains valid even if Willie lacks knowledge of $\P_1$, as Alice's performance can only improve.

Since the Alice-to-Bob communication channel experiences block fading---unlike the AWGN channels typically assumed in reliable communication scenarios---there remains a nonzero probability, not diminishing with $n$, that some sub-blocks will experience deep fades, potentially causing decoding failures at Bob \cite{sobers2017covert}. To rigorously ensure reliable decoding despite block fading, we adapt the coding argument  in \cite{bash2013limits} as follows.

We define a threshold $\delta > 0$ and for analysis, call a block usable (good) if its fading power satisfies $|h^{(a,b)}|^2 \ge \delta$; otherwise it is unusable (deep fades). 
The expected fraction of usable (good) blocks is then $\gamma := \int_{\delta}^{\infty} e^{-x}\,dx = e^{-\delta}$.

To compensate for blocks that are
unusable, we instead let Alice transmit over $n$ symbols total, but design the code such that only a subset $n' = \Theta(n)$ of the $n$ symbols need to land in 
good blocks. Each symbol is chosen independently from $\mathcal{N}(0,\sigma_a^2)$ with power scaling $\sigma_a^2 = c/\sqrt{n}$ with $c>0$. Since this power scaling satisfies $\sigma_a^2\in\mcO(1/\sqrt{n})$, the previously established argument directly implies covertness. 

Partition the $n$ symbols into $M(n)=n/B(n)$ blocks, each experiencing independent fading. Let $G_j=\mathbf{1}_{{|h_j^{(a,b)}|^2\ge\delta}}$ indicate good
blocks, and define the number of good blocks as $S=\sum_{j=1}^{M(n)}G_j$. Recall that we allow $M(n)$ to grow without bound as $n \to \infty$, ensuring that Hoeffding’s inequality can be applied to the i.i.d. $\mathrm{Bernoulli}(\gamma)$ indicators $G_j$, yielding
$\P\big(|S-\gamma M(n)|>\epsilon_2\gamma M(n)\big)\le 2e^{-2\epsilon_2^{2}\gamma M(n)},$ for any $\epsilon_2>0$. Thus, with probability approaching one
as $M(n)\to \infty$,
at least $n' := (1-\epsilon_2)\gamma n$ good (high-SNR)
symbols fall in such blocks as $n\to\infty$.

Because only the $n'$ good symbols contribute, the squared Euclidean distance between any two distinct codewords\footnote{Deep-faded blocks ($|h^{(a,b)}|^2<\delta$) 
contribute only codeword-independent noise, and random phase rotations (due to complex fading) preserve Euclidean norms; thus, even without channel-state information (CSI) at the decoder, the asymptotic leading-order pairwise distance is determined by the usable blocks, consistent with the block-fading noncoherent capacity framework \cite{liang2004capacity}.}---restricted to these 
$n'$ coordinates, irrespective of their identities---is approximately $2\sigma_a^{2}n' = 2c' \sqrt{n'}$, where $c' = c \sqrt{(1 - \epsilon_2)\gamma}$. This leads to a pairwise error probability upper bound $\exp\!\big(-\frac{c'}{4\sigma_b^2}\sqrt{n'}\big)$. Applying the union bound over all $2^{n'R}$ messages yields an overall error probability of $\exp\!\big(n'R\ln 2 - \frac{c'}{4\sigma_b^2}\sqrt{n'}\big)$. To ensure this exponent is negative, we choose $R = \frac{\rho}{\sqrt{n'}}$ with $\rho \in \big(0, \frac{c'}{4\sigma_b^{2} \ln 2}\big)$. Each codeword encodes a message from among $2^{n'R}$ possibilities, thus reliably communicating $\log(2^{n'R}) = n'R = \Theta(\sqrt{n'}) = \Theta(\sqrt{n})$ covert information bits.


Since our scheme achieves this rate with power scaling $\Theta(1/\sqrt{n})$, and, as previously mentioned, more efficient schemes may exist, Alice can reliably and covertly transmit $\Omega(\sqrt{n})$ bits in $n$ channel uses.

\end{proof}

\subsection{Converse (Upper Bound)}
\begin{theorem}[Upper Bound]
    \label{th:converse}
	Assume the fading model defined earlier with a total of $n$ symbols. 
    Then, Alice is restricted to $\mcO(\sqrt{n})$ bits in $n$ channel uses.
    This holds whether or not Willie knows $\P_1$.
\end{theorem}

\begin{proof}
	Consider an arbitrary codebook. Let the total transmitted power across
	all symbols for the codeword chosen for transmission be $P_n$, i.e., $\sum_{i=1}^n |f_i|^2 = P_n$.   Denote the ratio of power allocated to block $j$ to total power by $w_j$, i.e., $w_j = \sum_{k=1}^{B(n)} |f_{(j-1)B(n)+k}|^2/P_n$. Note that $\sum_{j=1}^{M(n)} w_j = 1$. We will show that $P_n$ must be limited, or the presence of the sequence on the channel is readily detected.
	
	Define the test statistic $\bar{Y} = \frac{1}{n}\sum_{i=1}^{n} Y_i$ where $Y_i=|Z_{i}|^2$ with $i=1,\ldots,n$.

	Consider a threshold-based test to distinguish between $\mathrm{H_0}$ and $\mathrm{H_1}$. Let the threshold be $t = \E_0[\bar{Y}] + d/\sqrt{n}$, where $d > 0$. 
	
    Under $\mathrm{H_0}$, 
	\begin{equation}
	\P_{FA}=\P_0(\bar{Y}>t ) = \P_0(\bar{Y}-\E_0[Y_i] > d/\sqrt{n}).
	\end{equation}

    Also, $Z_i \sim \mathcal{CN}(0, \sigma_0^2)$, and thus
	\begin{align}
	\E_0[\bar{Y}]=\sigma_0^2 \quad \text{and} \quad \var_0(\bar{Y})=\sigma_0^4/n.
	\end{align}
    
	Applying Chebyshev’s inequality,
	\begin{equation}
	\P_{FA}\le{\var_0(\bar{Y})}/{(d/\sqrt{n})^2}={\sigma_0^4}/{d^2}.
	\end{equation}
	Thus, for any desired $\P_{FA}>0$, we can set $d=\sigma_0^2/\sqrt{\P_{FA}}$ to ensure the bound holds.

Under $\mathrm{H_1}$, applying Chernoff's inequality, for any $s\in(0,1/\sigma_0^{2})$,
\begin{equation}
\begin{aligned}
\P_{MD}
= \P_1(\bar Y \le t)
\le e^{snt}\,\E_1\big[e^{-s\sum_{i=1}^n Y_i}\big].
\end{aligned}
\end{equation}
Denote $\lambda = d_{aw}^{\alpha}$ and $X_j=|h_j|^2/\lambda$. Then, since the blocks are independent, 
\begin{equation}
\E_1\big[e^{-s\sum_{i=1}^n Y_i}\big]
= \prod_{j=1}^{M(n)} \E_{X_j}\Big[\E\big[e^{-s\sum_{k=1}^{B(n)} Y_{(j-1)B(n)+k}} \,\big|\,X_j\big]\Big].
\end{equation}
For $i=(j-1)B(n)+k$ in block $j$, under $\mathrm{H_1}$ we have
$Z_i \mid X_j \sim \mathcal{CN}(\sqrt{X_j}\,f_i,\sigma_0^2)$, so
$Y_i=|Z_i|^2$ has conditional Laplace transform
\begin{equation}
\E\big[e^{-s Y_i}\,\big|\,X_j\big]
= \frac{e^{-\frac{s\,X_j |f_i|^2}{1+\sigma_0^2 s}}}{1+\sigma_0^2 s}.
\end{equation}
Hence, within block $j$, using $\sum_{k=1}^{B(n)} |f_{(j-1)B(n)+k}|^2 = w_j P_n$,
\begin{equation}
\E\big[e^{-s\sum_{k=1}^{B(n)} Y_{(j-1)B(n)+k}}\big|X_j\big]
=
\frac{e^{-\frac{s X_j}{1+\sigma_0^2 s}w_j P_n}}{(1+\sigma_0^2 s)^{B(n)}}.
\end{equation}
Taking expectation 
over
$X_j\sim\mathrm{Exp}(\lambda)$ and using its Laplace transform 
$\E[e^{-uX_j}] = \lambda/(\lambda+u)$ for $u \ge 0$, 
we have
\begin{equation}
\E\big[e^{-s\sum_{k=1}^{B(n)} Y_{(j-1)B(n)+k}}\big]
= 
\frac{(1+\sigma_0^2 s)^{-B(n)}}{1+\frac{s w_j P_n}{\lambda(1+\sigma_0^2 s)}}.
\end{equation}
Taking the product over all $M(n)$ blocks, and using $M(n)B(n)=n$,
\begin{equation}
\textstyle\E_1\big[e^{-s\sum_{i=1}^n Y_i}\big]
= (1+\sigma_0^2 s)^{-n}
\prod_{j=1}^{M(n)}\frac{1}{1+\tfrac{s w_j P_n}{\lambda(1+\sigma_0^2 s)}}.
\end{equation}

Therefore, with $t=\sigma_0^2 + d/\sqrt{n}$,
\begin{equation}
\P_{MD}
\le e^{sn\sigma_0^2 - n\log(1+\sigma_0^2 s) + ds\sqrt{n}
- \sum_{j=1}^{M(n)} \log\!\big(1+\tfrac{s w_j P_n}{\lambda(1+\sigma_0^2 s)}\big)}.
\end{equation}
Let $s={1}/{\sqrt n}$. Then $ds\sqrt n=d$, and since
$\sigma_0^2 s \to 0$, we may use the expansion $\log(1+u)=u-\tfrac{u^2}{2}+\mcO(u^3)$ to obtain
$sn\sigma_0^2-n\log(1+s\sigma_0^2)=\sigma_0^4/2 + o(1)$.
Moreover, since 
$\sum_j w_j=1$, it follows that $\prod_{j} (1+ w_i a) \geq 1+a$ for all $a\geq 0$ and hence 
\begin{equation}
\textstyle\sum_{j=1}^{M(n)} \log\!\Big(1+\tfrac{s w_j P_n}{\lambda(1+\sigma_0^2 s)}\Big) 
\ge\log\!\Big(1+\tfrac{s P_n}{\lambda(1+\sigma_0^2 s)}\Big).
\end{equation}
Together, these yield
\begin{equation}
\P_{MD}\le
{e^{{\sigma_0^4}/{2}+d+o(1)}}/{\Big(1+\tfrac{P_n}{\sqrt n\lambda(1+\sigma_0^2 /\sqrt n)}\Big)}.
\end{equation}
Hence, for any transmitted codeword, if $P_n/n=\omega(1/\sqrt{n})$
so that $P_n/\sqrt{n}\to\infty$, 
then $\P_{MD}\to0$. 
Combined with the ability to make $\P_{FA}$
	any desired value, as described above, this means that any codeword sent
	with average power $P_n/n = \omega(1 / \sqrt{n})$ per symbol is detected by Willie with
	high probability.  Hence, there does not exist a covert codebook with a
	positive fraction of codewords with power $\omega(1 / \sqrt{n})$, hence establishing the result.
    Therefore, Alice is restricted to transmitting at most $\mcO(\sqrt{n})$ bits over $n$ channel uses.
	
	Note that our proof does not rely on the assumption that Willie knows Alice's exact distribution. As knowledge of Alice's distribution only provides more information to Willie, we conclude that the converse results apply whether or not Willie knows $\P_1$.
\end{proof}

\section{Numerical Evaluation}
\label{sec:eval}

Our aim is to validate the square root laws outlined in Section~\ref{sec:model}, utilizing synthetic data. 
In what follows, we express Alice's variance as $\sigma_a^2 = c/n^{-\rho}$, $\rho \in(0,1)$.  Figure~\ref{fig:phase2}(a) illustrates the error probability $\P_E=\P_{FA}+\P_{MD}$ of Willie's optimal detector, the Likelihood Ratio Test (LRT) as a function of $\rho$ for $\lambda=100$, $B(n)=1$, and four different values of $n$.  The parameters are set as follows: $c=0.03$, and a false alarm probability $\P_{FA}=0.01$.

As the value of $\rho$ increases, the error probability undergoes a noticeable transition, shifting from 0.01 to 1. This abrupt change is particularly evident at $\rho=0.5$, aligning with the condition $ \sigma^2_{a}(n)= c/\sqrt{n} $. This observation is consistent with the square root law. We also observe that, for the higher values of $n$, the phase transition is sharper.

\begin{figure}[h]
	\centering
	\begin{tabular}{cc}
	\includegraphics[width=0.22\textwidth]{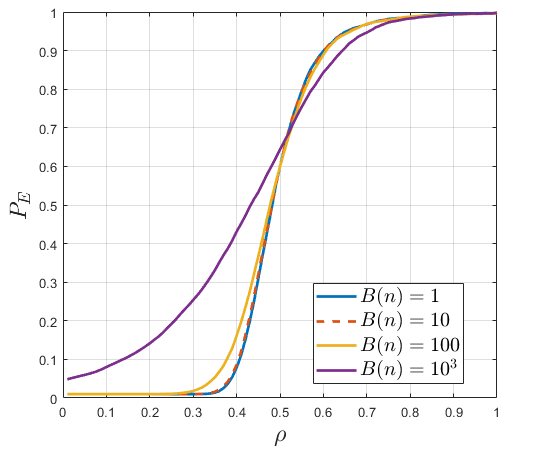} &
	\includegraphics[width=0.22\textwidth]{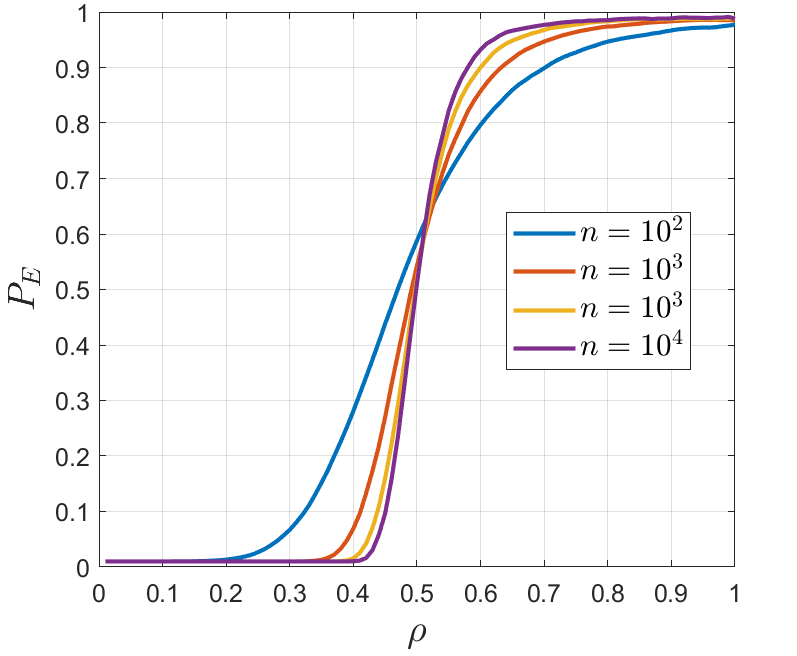}\\
		(a) & (b) \\
	\end{tabular}
	\caption{Phase transition analysis with (a) $B(n)=1$, $\lambda=100$, $\sigma^2_{w}=1$, $\P_{FA}=0.01$, and $\sigma^2_{a}=c/n^{-\rho}$ where $c=0.03$; (b) $n=1000$, $\lambda=100$, $\sigma^2_{w}=1$, $\P_{FA}=0.01$, and $\sigma^2_{a}=c/n^{-\rho}$ where $c=0.03$.
		As $\rho$ grows, $\sigma^2_{a}$ decreases and $P_E$ transitions from 0.01 to 1. This transition occurs around $\rho=0.5$, in agreement with the square root law.}
	\label{fig:phase2}
\end{figure}

Figure~\ref{fig:phase2}(b) further explores the relationship between $\P_E$ and $\rho$ by holding $n$ constant at $1000$ and varying the number of blocks, while maintaining the same parameters. This variation aims to assess the impact of block size on $\P_E$ across three different block sizes. As the value of $\rho$ increases, 
$\P_E$ shifts from 0.01 to 1. Again, this change is evident at $\rho=0.5$. Also, a smaller block size results in a sharper phase transition.


Next, we explore whether the power detector (PD) is identical to the likelihood ratio test (LRT) or not. We consider a scenario with only two observations $z_1$ and $z_2$, akin to a demonstration by counterexample.  We compare the power detector with the likelihood ratio test, denoted by $\Lambda(z_1^2, z_2^2)$. 
For the power detector to demonstrate optimality, its decision boundary, represented by the linear equation $z_1^2 + z_2^2 = \tau$ should align with the contour where $\Lambda(z_1^2, z_2^2) = \tau'$, where $\tau$ and $\tau'$ are chosen such that $\P_{FA}\le0.01$. However, as shown in Figure \ref{fig:contourPlot}, this is not the case; instead of a straight line, $\Lambda(z_1^2, z_2^2) = \tau'$ 
forms an arc. 
Hence, the power detector differs from the optimal test.

\begin{figure}[h]
	\centering
	\includegraphics[width=0.25\textwidth]{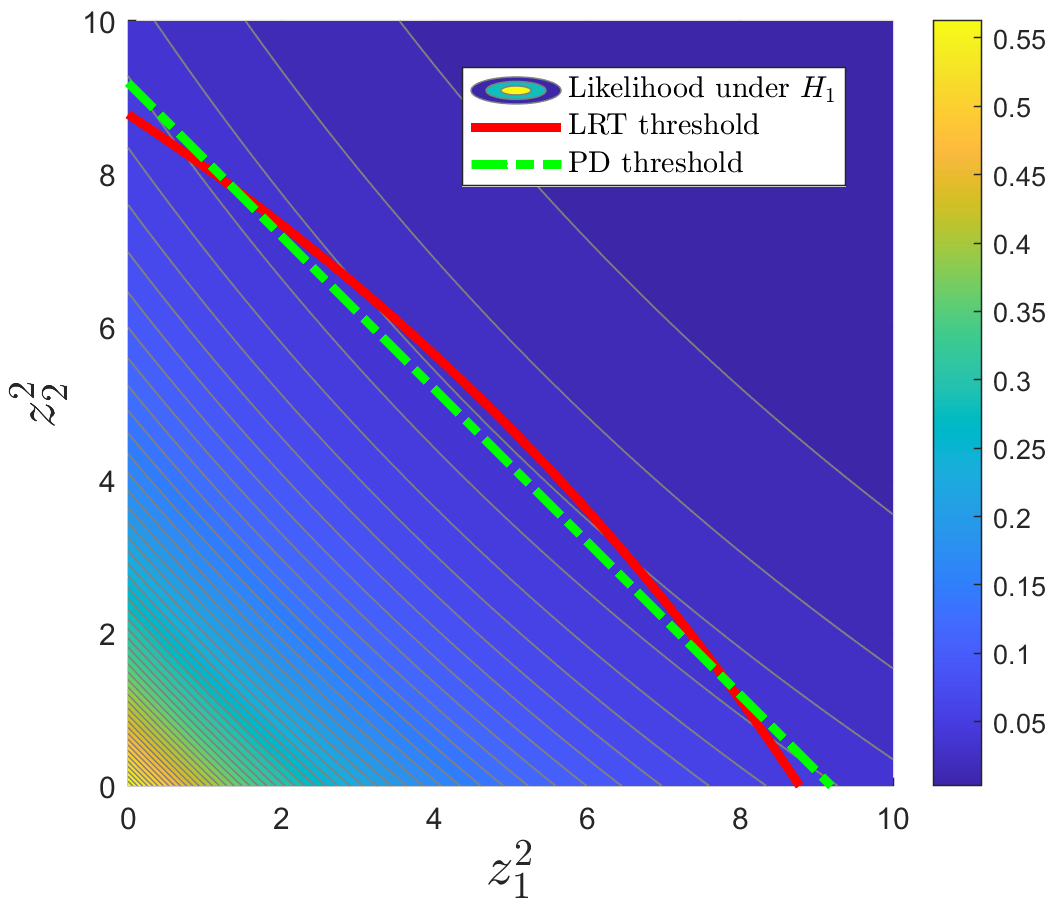}
	\caption{Contour plot showing the function $\Lambda(z_1^2, z_2^2) = \tau'$ where $\lambda=1$, $\sigma_w^2=1$, $\sigma_a^2=1$, and $\tau'$ is chosen such that $\P_{FA}\le0.01$. The arc-shaped contour demonstrates that a linear threshold test on powers, represented by $z_1^2 + z_2^2 = \tau$, where $\tau$ is chosen such that $\P_{FA}\le0.01$, does not match the optimal test. For instance, when $z_1^2 = z_2^2 = 4.8$, PD accepts $\mathrm{H_1}$ 
    while the LRT rejects it. Conversely, when $z_1^2 = 9$ and $z_2^2 = 0.2$, PD rejects $\mathrm{H_1}$ while the LRT accepts it.
    }
	\label{fig:contourPlot}
\end{figure}

\section{Conclusion}
 We studied covert communication over block fading, focusing on the asymptotic limits of covert bit transmission. 
 We show 
 that the maximum number of 
 bits 
 Alice can covertly transmit
 over $n$ channel uses 
scales as $\Theta(\sqrt{n})$. 

\bibliographystyle{IEEEtran}

\bibliography{IEEEabrv,main}

\end{document}